\newtheorem{theorem}{Theorem}
\newtheorem{corollary}{Corollary}
\newtheorem{definition}{Definition}
\newcommand{\calk}{\mathfrak{K}}
\newcommand{\TT}{\mathbb{T}}
\newcommand{\KK}{\mathbb{K}}
\newcommand{\FF}{\mathbb{F}}
\newcommand{\RR}{\mathbb{R}}
\newcommand{\NN}{\mathbb{N}}
\newcommand{\EE}{\mathbb{E}}
\newcommand{\knl}{\mathfrak{K}}
\newcommand{\calA}{\mathcal{A}}
\newcommand{\calC}{\mathcal{C}}
\newcommand{\calH}{\mathcal{H}}
\newcommand{\calL}{\mathcal{L}}
\newcommand{\RH}{\mathcal{R}}
\newcommand{\HsubX}{{H}_{{X}}}
\newcommand{\HsubL}{{H}_{L}}
\newcommand{\bbR}{\mathbb{R}}
\newcommand{\EsubXt}{\mathbb{E}_{X_t}}
\newcommand{\ghat}{\hat{g}}
\newcommand{\dghat}{\dot{\hat{g}}}
\newcommand{\gtil}{\tilde{g}}
\newcommand{\Ext}{\mathcal{E}}
\title{Strictly Decentralized Adaptive Estimation of External \\ Fields
using Reproducing Kernels}
\author{Jia Guo$^{*}$, Michael E. Kepler$^{\dagger}$, Sai Tej  Paruchuri$^{*}$, Haoran Wang$^{*}$,\\ Andrew J. Kurdila$^{*}$, and  Daniel J. Stilwell$^{\dagger}$ %<-this % stops a space
\noindent 
\thanks{\noindent %$^1$ {\tt \small saitep@vt.edu}, $^2$ {\tt \small kurdila@vt.edu} \newline 
%    $^2$PhD candidate, {\tt \small jguo18@vt.edu}, \newline 
%    $^3$PhD candidate, {\tt \small mkepler@vt.edu}, \newline 
%    $^4$W. Martin  Johnson Professor, {\tt\small kurdila@vt.edu}, \newline
%    $^5$Professor, {\tt\small stilwell@vt.edu}, \newline
    $^{*}$Mechanical Engineering Dept., Virginia Tech, Blacksburg, VA \newline
    $^{\dagger}$Electrical and Computer Engineering Dept., Virginia Tech, Blacksburg, VA 
        }%
}
\begin{document}

\maketitle

% REQUIRED
\begin{abstract}
This paper describes an adaptive method in continuous time  for the estimation of external fields by a team of $N$ agents.  The agents $i$ each explore subdomains $\Omega^i$ of a bounded subset of  interest $\Omega\subset X:=\RR^d$. Ideal adaptive estimates $\hat{g}^i_t$ are derived for each agent from a  distributed parameter system (DPS) that takes values in the scalar-valued reproducing kernel Hilbert space $H_X$ of functions over $X$.  Approximations of the  evolution of the ideal local estimate $\hat{g}^i_t$ of agent $i$ is constructed solely  using observations made by agent $i$ on a fine time scale. Since the local estimates on the fine time scale  are constructed independently for each agent,  we say that the method is strictly decentralized.  On a coarse time scale, the individual local estimates $\hat{g}^i_t$ are fused via the expression  $\hat{g}_t:=\sum_{i=1}^N\Psi^i \hat{g}^i_t$ that uses a  partition  of unity  $\{\Psi^i\}_{1\leq i\leq N}$ subordinate to the cover $\{\Omega^i\}_{i=1,\ldots,N}$ of $\Omega$. Realizable algorithms are obtained by constructing finite dimensional approximations of the DPS in terms of scattered bases defined by each agent from samples along their trajectories. Rates of convergence of the error in the finite dimensional approximations are derived in terms of the fill distance of the samples that define the scattered centers in each subdomain. The qualitative performance of the convergence rates for the  decentralized estimation method is illustrated via numerical simulations. 
\end{abstract}

% REQUIRED
\begin{keywords}
  consensus estimation, reproducing kernel
\end{keywords}

\section{Introduction}

\subsection{Overview of the Methodology}
This paper studies the convergence of  the { strictly   decentralized} method for estimation of external fields by agent teams  in the setting of a reproducing kernel Hilbert space (RKHS). This paper is the second in a three part paper that focuses on deriving sharp rates of convergence for adaptive, nonparametric  estimation by teams in an RKHS, where the methodologies vary depending on how information is shared  by the team.    We represent the motion of each agent $i=1,\ldots,N$  by the  trajectory $t\mapsto x^i_t\in X:=\RR^d$. In the estimation task at hand, the agents  collectively explore a bounded subdomain $\Omega\subset X$ of interest, and each individual agent $i$ travels in some assigned   subdomain $\Omega^i\subseteq \Omega$ where $\bigcup_{i=1,\ldots,N}\Omega^i=\Omega$. In this paper we assume that the unknown field that must be estimated is a real-valued function $g:X\rightarrow \RR:=Y$, which  is assumed to reside in an RKHS space $H_X$ of real-valued functions over $X$.  While the agent $i$  traverses the subdomain $\Omega^i$, in principle a continuous family of observations $(x^i_t,y^i_t)\in X\times Y$, the local measurements of agent $i$, are collected. The output $y_t^i:=g(x^i_t)$ is the sample of the unknown field $g$ at the agent's location $x^i_t$ at time $t\in \RR^+$.  In practice, of course, both the state $x^i_t$ and the observation $y^i_t$ are known imprecisely, that is, they are subject to noise. Theoretically, the problem is to devise a decentralized estimation methods in which some subset of the history of local observations $\{(x^i_\tau,y^i_\tau)\}_{\tau\leq t}$ collected by agent $i$ are used, in conjunction with  information shared with some of the other agents, to generate an   estimate by agent $i$  of the unknown function $g$.
% {\color{red} We say that the collection of all such estimates reaches consensus if we have that  the estimates of each of the individual agents approach $g$ as $t\rightarrow \infty$. NOTE: IN THIS SENSE THE STRICTLY DECENTRALIZED METHOD DOES NOT REACH CONSENSUS UNLESS WE TRANSMIT THE FUSED ESTIMATE TO ALL THE AGENTS. THOUGHTS? }
% \textcolor{blue}{MEK: I would support using the word "consensus" if $\ghat^i_t \rightarrow g$ as $t\rightarrow \infty$ for each $i=1:N$. However, the evolution law \eqref{eq:strict1a} for $\ghat^i_t$ we consider in this paper precludes that possibility unless the trajectory $t \mapsto x^i$ for each agent $i=1:N$ is dense in $\Omega$ such that the set of basis centers $\Xi^i_L$ can be made dense in $\Omega$. My preference is to refrain from using the word "consensus" in this paper.}

Conceptually, the algorithm derived in the paper is defined in two stages. Initially, an ideal estimate of the unknown function $g$ is defined in  terms of a  trajectory of some ideal distributed parameter system (DPS). For the  individual trajectories of the agents, this is denoted $t\mapsto \hat{g}^i(t)$ ,  and for the trajectory of the collective estimate we write  $t\mapsto \hat{g}(t)$. These are  referred to as  ideal estimates since at this point it is only known that they take values in a generally infinite dimensional state space $H_X$. To obtain practical and realizable algorithms, we subsequently define finite dimensional approximations $\hat{g}^i_{L,t}$ and $\hat{g}_{L,t}$, respectively,   of the ideal trajectories $\hat{g}^i_t$ and $\hat{g}_t$. The convergence of the proposed overall method is then bounded, for instance in the study of the error in the collective estimate,  in an inequality such as 
\begin{align*}
   \|g-\hat{g}_{L,t}\|_{\FF} \leq \underset{\text{ideal error}}{\|g-\hat{g}_{t}\|_\FF} + \underset{\text{approximation error}}{\|\hat{g}_{t}-\hat{g}_{L,t} \|_\FF}
\end{align*}
for a suitably defined function space $\FF$. 
We explain in this paper  how to choose the function space $\FF$, define the DPS that determines the ideal estimate,  define the evolution law for the finite dimensional approximations, and how to pick bases to guarantee the convergence of the approximation error. The strongest conclusions of the paper derive rates of convergence for adaptive  estimation that depend on the fill distance of samples in the domain of interest. The authors are not aware of any other such sharp convergence rates for decentralized estimation of functions. 

\subsection{The Centralized Method} 
\label{sec:centralized}
This paper is the second of a three part series where rates of convergence for multi-agent estimation of a function in an RKHS are derived for a variety of information sharing protocols. 
For completeness, we begin with a summary for the {\em centralized} setting. Our new method for strictly decentralized estimation builds on these results. 
The centralized adaptive estimation strategy defines a  collective ideal  estimate $\ghat_t:=\ghat(t,\cdot)\in \HsubX$ that satisfies the equation 
\begin{align}\label{eq: est_eqn}
\dghat_t = \gamma \EsubXt^* (Y_t - \EsubXt \ghat_t) = {\gamma \EsubXt^*\EsubXt} \gtil_t,
\end{align}
where $\EE_{X_t}$ is the vector evaluation operator on  $\HsubX$ evaluated at $X_t:=(x^1_t,\ldots,x^N_t)^T\in X^N$. That is, for any $Z\in X^N$, we have $\EE_Z:H_X\rightarrow \RR^N$ with $$\EE_Z g=(g(z^1),\ldots,g(z^N))^T:=(E_{z^1}g,\ldots,E_{z^N}g)^T \in \RR^N,$$
where for any $z\in X$ the operator $E_z:H_X\rightarrow \RR$ is the evaluation functional $E_zf:=f(z)$.  
 Note that the observations of all the agents are used in the right hand side of the evolution above at each time $t\geq 0$. Forming the equations for  forward propagation of this law requires the state trajectories and observations of all the agents at each time $t$. It is for this reason that we say  that this defines   a centralized  estimation method. The single evolution law above can be implemented by a central processor that collects all the observations of all the agents at each time step.  Practical techniques for building realizable finite dimensional approximations of this ideal estimate is a topic that was  considered in part I of this paper. 

\subsection{The Strictly Decentralized Approach}
In the centralized approach,  the agent collective is just a sensing  network with communication capabilities.  However, in many contemporary situations the members of the  agent teams do have a substantial computing capability.  In fact, it is often desirable to exploit the inherent parallelism afforded by a network that has computational facilities at each node.  For all of these reasons, we describe an  estimation procedure where different propagation rules are used by the different agents.  Each agent constructs a local estimate over some assigned subdomain. Then,  a joint estimate is constructed on a coarser time scale by sharing information about these local estimates. 

 The strictly decentralized approach studied in this paper describes a scenario where, for much of the time, agents work on estimates with no communication from other agents and fuse estimates by sharing information only on occasion or as needed.   The ideal, infinite dimensional estimate of agent $i$ for $i=1,\ldots,N$ is defined as the solution of the equation
\begin{align}
\dot{\hat{g}}^i_t&=\gamma E^*_{x^i_t}\left (y^i_t-E_{x^i_t}\hat{g}^i_t \right ) \notag \\
&=\gamma E^*_{x^i_t}E_{x^i_t} \tilde{g}^i_t
=\calA^i(t)\tilde{g}^i_t,  
    \label{eq:strict1a}
\end{align}
subject to $\hat{g}^i_t|_{t=0}=\hat{g}^i_0$, with $\calA^i(t):=\gamma E^*_{x^i_t}E_{x^i_t}$ and $\tilde{g}^i_t:=g-\hat{g}^i_t$. 
Note that in principle the above equation can be solved by agent $i$ alone, using the local  observations $(x_t^i,y^i_t)\in \Omega^i \times Y$. Each agent  can be controlled to drive the system trajectory $t\mapsto x^i_t$ over the assigned region $\Omega^i$, and the evolution in Equation \ref{eq:strict1} is integrated along the path, without input from other agents.  
On some coarser time scale, these local estimates $\{\hat{g}^i_t\}_{i=1}^N$ constructed by the agents (which eventually  are  reasonable approximations over $\Omega^i$ as  $t\rightarrow \infty$),  are used to define an ideal, infinite dimensional collective estimate $\hat{g}(t)$ that is an approximation over $\Omega$. 
 The ideal,  collective  estimate $\hat{g}(t)$ is defined by 
the expression
$
    \hat{g}(t)=\sum_{i=1}^N \Psi^i \hat{g}^i(t)
$
where the family of functions $\{\Psi^i\}_{i=1}^N$ is a  partition of unity subordinate to the covering $\{\Omega^i\}_{i=1}^N$ of $\Omega$.  The final form of the method, one that yields implementable algorithms, is achieved by introducing approximations of Equation \eqref{eq:strict1a}, the ideal estimate $\hat{g}^i_t$ for each agent, and the collective ideal estimate $\hat{g}_t$.
%
%
\begin{comment}In other words we have 
\begin{align*}
& \sum_{i=1}^N \Psi^i(x) = 1 &  \forall x\in \Omega, \\
& \text{support}\{\Psi^i \} \subseteq \kappa^i \subseteq \Omega ^i & 1\leq i\leq N, 
\end{align*}
and $\Psi^i\in C^\infty(\Omega)$ for each $1\leq i\leq N$. 
\end{comment}

Again, so as to place the strictly decentralized method above in context, we briefly summarize the {hybrid decentralized, conensus estimation} method that is studied in part three of this paper. In the hybrid method, each agent $i$ constructs an ideal estimate 
\begin{align*}
    \dot{\hat{g}}^i_t&=\gamma E^*_{x^i_t}\left (y^i_t-E_{x^i_t}\hat{g}^i_t \right ) + \gamma_1 \sum_{j=1}^N E^*_{x^j_t} \mathcal{L}_{ij} \left (y^j_t - E^j_{x^j_t}\hat{g}^j_t \right )
\end{align*}
where $\gamma,\gamma_1$ are positive learning coefficients and  $\mathcal{L}$ is the Laplacian matrix of an undirected graph that defines the communication amongst agents. As in the study of consensus estimation methods in Euclidean space, in the above equation for agent $i$, the rightmost term above can be calculated in terms of a summation over the  neighbors of agent $i$. Propagation of the ideal estimate $\hat{g}^i_t$ {\em requires} sharing certain information among agents continuously in time, at least theoretically.  This form of the hybrid decentralized estimation method is written here to summarize an important feature of the { strictly decentralized} approach.  In contrast to the hybrid method, in the strictly decentralized method  each agent works independently ``as long as possible'' in some sense. Of course, the method in this paper can be used to define a consensus estimation strategy by periodically sharing some approximation of the fused estimates.  To keep our presentation short in this conference paper, we leave all the issues associated with convergence of the consensus methods to part three of the paper. Just as the first paper enables a rather succinct analysis of the rates of convergence in this paper, so too will the rates of convergence in this paper set the stage for the analysis in part three.

\subsection{Our Contributions}
This paper establishes  several new results for  the decentralized adaptive estimation method summarized in the previous subsections. Theorem \ref{th:ideal_strict} gives sufficient conditions for the well-posedness of the DPS that governs the ideal estimate for each agent, as well as a concise statement describing the stability and convergence properties of the ideal DPS defined for each agent. In particular the convergence of the ideal error is based on an appropriate definition of persistence of excitation in the RKHS space.  A method for building realizable approximations of the DPS for each agent is introduced in Equations  \eqref{eq:ct_alpha_evo} and \eqref{eq:dt_alpha_evo}.  Finally, Theorem \ref{th:approx_rate} states conditions under which the approximation error of collective estimate is $O(h_{\Xi_L,\Omega}^p)$ where $h_{\Xi_L,\Omega}$ is the fill distance of the discrete centers $\Xi_L$ in the domain of interest $\Omega$  and $p$ depends on the smoothness of the kernel basis. The paper explains how to choose the kernel from a number of standard examples and determine $p$ in terms of the power function for the kernel in these cases. 
\label{sec:contributions}
\section{Background} \label{sec: bg}

\subsection{Notation and Symbols}
The symbols $\RR,\RR^+$ denote the real numbers and non-negative real numbers respectively, whereas  $\NN$ and $\NN_0$ are used for the positive and non-negative integers. In many instances during a proof, the constant appearing in an inequality does not play a role later. For  this reason we use the expression $a \lesssim b$ to mean that there is a constant $c>0$ that does not depend on $a,b$ such that $a\leq c\cdot b$. An analogous definition holds for $\gtrsim$. By $\mathcal{L}(U,V)$ we denote the collection of bounded, linear operators acting between the normed vector spaces $U$ and $V$. We write $\mathcal{L}(U)$ as shorthand for $\mathcal{L}(U,U)$. For a normed vector space $U$ the spaces of $U-$valued Lebesgue integrable functions over an interval $I\subseteq \RR^+$,  $L^p(I,U)$, have the usual norm 
$\|f\|_{L^p(I,U)}:=\left (\int_I \|f(\tau)\|_{U}d\tau \right)^{1/p}$ for $1\leq p<\infty$, while for $p=\infty$ we have $\|f\|_{L^\infty(I,U)}:=\text{ess sup} \{ \|f(\tau)\|_U\ | \ \tau\in I\}$. 

\subsection{Reproducing Kernel Hilbert Spaces}
This introduction is necessarily brief and focuses on the essentials for this paper.  See \cite{wendland2004scattered,paulsen2016introduction,saitoh2016theory} for a full  account  of the theory of scalar-valued RKHS spaces. All of the RKHS $H_X$   in this paper are induced by a real-valued, admissible kernel function $\knl:X\times X\rightarrow \RR$, so that $H_X$ is a Hilbert space of real-valued functions over $X$. This means that $H_X$ is a Hilbert space in which each of  the evaluation functionals  $E_z:H_X\rightarrow \RR$ for $z\in X$  are linear and  bounded.  When $\knl$ is the kernel that induces the RKHS $H_X$, we  write $\knl_z:=\knl(z,\cdot)$ for the kernel basis function located at $z\in X$. The adjoint $E^*_z:\RR\rightarrow H_X$ of the evaluation functional $E_z$ can be shown to be given by the multiplication by the kernel basis $k_z$: we have $E^*_z \alpha = \knl_z \alpha$ for each $\alpha\in \RR$.  This property  is used throughout the paper in building practical algorithms.   The reproducing property of the native space $H_X$ states that we have $(g,\knl_z)_{H_X}=g(z)$ for every $z\in X$ and $g\in H_X$. 
By definition we have $H_X:=\overline{\text{span}\{\knl_x\ | \ x\in X\}}$, and for any set $A\subset X$ we define the closed subspace $H_A:=\overline{\text{span} \{\knl_x\ | \ x\in A\}}$.  We use $\Pi_A:H_X\rightarrow H_A$ to denote the $H_X$-orthogonal projection onto the subspace $H_A$. 
When we build approximations and define sampling, we will also need to define the spaces $\RH_A:=R_A(H_X)$   that consist of the restriction of functions in $H_X$ to the  subset $A\subset X$, where $R_A$ is the restriction operator.    These are also RKHS spaces in their own right, having  a kernel that is just the restriction of the kernel $\knl$ for $H_X$ to the subset $A$. It can be shown that the subspace $H_A\subset H_X$ and the space of restrictions 
$\RH_A:=R_A(H_X)$ are isomorphic. In fact, we have $\Pi_A=\Ext_AR_A$ where $\Ext_A:\RH_A\rightarrow H_X$ is a  minimal norm extension operator defined as $\Ext_A:=(R_A|_{H_A})^{-1}$. When we  define the inner product on $\RH_A$ as $(f,g)_{\RH_A}:=(\Ext_A f,\Ext_A g)_{H_X}$, this expression coincides with the inner product induced by the restricted kernel.

%
%%  
%

 %
 %% decentralized estimation
 %
\section{Strictly Decentralized Estimation} \label{sec:decentralized}
Recalling the strictly decentralized method in Equation \eqref{eq:strict1a} again here, the ideal  infinite dimensional estimate of agent $i$ for $i=1,\ldots,N$ is defined as the solution of the equation
\begin{align}
\dot{\hat{g}}^i_t&=\gamma E^*_{x^i_t}\left (y^i_t-E_{x^i_t}\hat{g}^i_t \right )\notag \\
&=\gamma E^*_{x^i_t}E_{x^i_t} \tilde{g}^i_t
=\calA^i(t)\tilde{g}^i_t,
\label{eq:strict1}
\end{align}
subject to $\hat{g}^i_t|_0=\hat{g}^i_0$, 
with $\calA^i(t):=\gamma E^*_{x^i_t}E_{x^i_t}$ and $\tilde{g}^i_t:=g-\hat{g}^i_t)$. The associated local error equation is then 
\begin{align}
    \dot{\tilde{g}}^i(t)&=-\calA^i(t) \tilde{g}^i(t), \quad \text{subject to} \quad \tilde{g}^i(0)=\tilde{g}^i_0.
    \label{eq:err_dec} 
\end{align}
Note that in principle the above equation can be solved by agent $i$ alone, using the local  observations $(x_t^i,y^i_t)\in \Omega^i \times Y$. Each agent  can be controlled to drive the system trajectory $t\mapsto x^i_t$ over the assigned region $\Omega^i$, and the evolution in Equation \eqref{eq:strict1} is integrated along the path, without input from other agents.  
On some coarser time scale, these local estimates $\{\hat{g}^i_t\}_{i=1}^N$ constructed by the agents, which are reasonable approximations over $\Omega^i$,  are used to define the ideal, infinite dimensional, collective estimate $\hat{g}(t)$ that is an approximation of $g$ over $\Omega = \bigcup_{i=1}^N\Omega^i$. 

The ideal, collective  estimate $\hat{g}_t$ is defined by 
the expression
\begin{align}
    \hat{g}_t=\sum_{i=1}^N \Psi^i \hat{g}^i_t
    \label{eq:strict3}
\end{align}
where the family of functions $\{\Psi^i\}_{i=1}^N\subset \RH_\Omega$ is a  partition of unity subordinate to the the covering $\{\Omega^i\}_{i=1}^N$.  In other words we have $\Psi^i:\Omega \rightarrow [0,1]$, 
\begin{align}\label{eq:POU_conds}
& \sum_{i=1}^N \Psi^i(x) = 1 &  \forall x\in \Omega, \\
& \text{support}\{\Psi^i \} \subseteq \kappa^i \subseteq \Omega ^i & 1\leq i\leq N. 
\end{align}
We also  assume in our theoretical discussions that the RKHS space $\RH_\Omega$ is invariant under multiplication by $\Psi^i$, so that 
$g\in \RH_\Omega$ implies $\Psi^i \cdot g\in \RH_\Omega$ for $i=1,\ldots,N$. In some practical cases, we relax this assumption, even taking piecewise constant or linear functions for $\Psi^i$. While these latter choices lead to simple and expedient implementations, there can be  some loss of smoothness in the final collective estimate. This will also degrade the convergence results we derive later, but the simplicity of the resulting implementations motivates us to consider these less smooth cases too.  
%------------------------------------------------------------------
% Asymptotic Behavior of the Ideal Estimates
%------------------------------------------------------------------
\subsection{Asymptotic Behavior of the Ideal Estimates} 
In any study of the governing DPS in Equation \eqref{eq: est_eqn}, since the state space is infinite dimensional, it can be a challenge to establish existence and uniqueness of solutions, stability, and convergence of trajectories. 
One advantage of the ideal estimation algorithm described above is that many of its properties can be inferred from the analysis for the centralized case. Indeed,  Equation \eqref{eq:strict1} can be considered to be but  a special case of Equation 
\eqref{eq: est_eqn} when $N=1$. Here we summarize several such results.
\begin{definition}
We say that the trajectory $t\mapsto x^i_t$ persistently excites the set $\Omega^i$ and closed subspace $\calH_{\Omega^i}$ if there are positive constants $\gamma$, $\Delta$ and $T$ such that
\begin{align*}
    \int_{t}^{t+\Delta} \left( E^*_{x^i_t}E_{x^i_t} f, f \right)_{\HsubX} dt \geq \beta \|\Pi_{\Omega^i} f\|_{\HsubX}^2
\end{align*}
for all $f \in \HsubX$ and $t \geq T$.
\end{definition}
\begin{theorem}
\label{th:ideal_strict}
Suppose that $t\mapsto x_t^i\in \Omega^i\subset \Omega \subset \RR^d$ is a continuous trajectory and that there is a constant $\bar{K}>0$ such that $\knl(z,z)\leq \bar{K}$ for all $z\in X$.  Then the following hold:
\begin{enumerate}[label=(\alph*)]
    \item The ideal, strictly local estimation laws represented by  either  Equation \eqref{eq:strict1} or \eqref{eq:err_dec} have  unique solutions $\hat{g}^i,\tilde{g}^i\in \calC([0,T],\HsubX)$ for any finite time $T>0$.
    \item
    The equilibrium at the origin of the error Equation \eqref{eq:err_dec} is stable. In particular,  the trajectory generated for any initial condition is bounded. 
    \item Furthermore, if $ \|x^i_t\|_{\RR^d}\leq c_0$ and $\|\dot{x}^i_t\|_{\RR^d}\leq c_1$ for all $t\in \RR^+$, then the local (ideal) output error $\tilde{y}^i_t:=E_{x^i_t}\tilde{g}^i$ generated by the learning law converges to zero as $t\rightarrow \infty$,
    \begin{align*}
        \lim_{t\rightarrow \infty} \tilde{y}^i_t = 0 \in \RR. 
    \end{align*}
    \item Finally, if the trajectory $t\mapsto x^i_t$ persistently excites the set $\Omega^i$ and closed subspace $\calH_{\Omega^i}$, then 
    \begin{align*}
        \lim_{t\rightarrow \infty} \left \|\Pi_{\Omega^i} \left (\hat{g}^i(t)-g \right ) \right \|_{\HsubX}=0.
    \end{align*}
\end{enumerate}
\end{theorem}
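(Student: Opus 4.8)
The plan is to treat Equation \eqref{eq:strict1} and its error form \eqref{eq:err_dec} as the $N=1$ specialization of the centralized DPS in \eqref{eq: est_eqn}, so that parts (a)--(d) follow by invoking the corresponding centralized results from part I of the paper with the single-agent evaluation operator $\EE_{X_t} = E_{x^i_t}$ and $\calA^i(t)=\gamma E^*_{x^i_t}E_{x^i_t}$. Concretely, I would first verify the hypotheses needed for the centralized theory: boundedness of $\knl$ on the diagonal, $\knl(z,z)\leq \bar K$, gives $\|E_z\|_{\calL(H_X,\RR)}^2 = \knl(z,z)\leq \bar K$ uniformly in $z$, hence $\calA^i(t)$ is a uniformly bounded, self-adjoint, positive semidefinite operator on $H_X$ for every $t$, and continuity of $t\mapsto x^i_t$ together with continuity of the feature map $z\mapsto \knl_z = E_z^*$ (a standard consequence of an admissible kernel) yields strong continuity of $t\mapsto \calA^i(t)$. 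With a uniformly bounded, strongly continuous generator $-\calA^i(t)$, part (a) is the standard existence/uniqueness result for the linear nonautonomous evolution $\dot{\tilde g}^i = -\calA^i(t)\tilde g^i$ on the Hilbert space $H_X$ (Picard--Lindelöf / evolution family argument), giving $\tilde g^i\in\calC([0,T],H_X)$ and correspondingly $\hat g^i = g - \tilde g^i\in\calC([0,T],H_X)$.

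For part (b) I would use the Lyapunov function $V(\tilde g^i):=\tfrac12\|\tilde g^i\|_{H_X}^2$ and compute $\dot V = (\tilde g^i,\dot{\tilde g}^i)_{H_X} = -(\tilde g^i,\calA^i(t)\tilde g^i)_{H_X} = -\gamma\,|E_{x^i_t}\tilde g^i|^2 = -\gamma\,|\tilde y^i_t|^2\leq 0$, so $t\mapsto\|\tilde g^i_t\|_{H_X}$ is nonincreasing; this gives stability of the origin and boundedness of every trajectory, and moreover, integrating, $\gamma\int_0^\infty |\tilde y^i_\tau|^2\,d\tau \leq V(\tilde g^i_0)<\infty$, so $\tilde y^i\in L^2(\RR^+,\RR)$. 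Part (c) then upgrades this $L^2$ bound to $\tilde y^i_t\to 0$ via a Barbalat-type argument: under $\|x^i_t\|\leq c_0$ and $\|\dot x^i_t\|\leq c_1$, the map $t\mapsto \tilde y^i_t = (\tilde g^i_t,\knl_{x^i_t})_{H_X}$ is uniformly continuous, since $\dot{\tilde y}^i_t = (\dot{\tilde g}^i_t,\knl_{x^i_t})_{H_X} + (\tilde g^i_t,\tfrac{d}{dt}\knl_{x^i_t})_{H_X}$, where $\dot{\tilde g}^i_t = -\calA^i(t)\tilde g^i_t$ is bounded (by $\gamma\bar K\|\tilde g^i_0\|$), $\|\tilde g^i_t\|\leq\|\tilde g^i_0\|$, and $\|\tfrac{d}{dt}\knl_{x^i_t}\|_{H_X}$ is bounded because the feature map is Lipschitz on the compact trajectory closure and $\|\dot x^i_t\|\leq c_1$ (this is exactly where the smoothness/boundedness hypothesis on the kernel and the trajectory speed bound enter). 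A uniformly continuous, $L^2(\RR^+)$ function tends to zero, giving (c).

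For part (d), the persistence-of-excitation hypothesis provides $\beta,\Delta,T>0$ with $\int_t^{t+\Delta}(E^*_{x^i_\tau}E_{x^i_\tau}f,f)_{H_X}\,d\tau \geq \beta\|\Pi_{\Omega^i}f\|_{H_X}^2$ for all $f$ and $t\geq T$. Applying this with $f=\tilde g^i_t$ and using the identity $(E^*_{x^i_\tau}E_{x^i_\tau}\tilde g^i_t,\tilde g^i_t)_{H_X}$ compared against $(E^*_{x^i_\tau}E_{x^i_\tau}\tilde g^i_\tau,\tilde g^i_\tau)_{H_X}=|\tilde y^i_\tau|^2$ — the discrepancy being controlled by $\|\tilde g^i_t - \tilde g^i_\tau\|_{H_X}\leq\int_t^\tau\|\dot{\tilde g}^i_s\|\,ds\leq \Delta\gamma\bar K\|\tilde g^i_0\|$, which however need not be small — one obtains, after a standard manipulation, a bound of the form $\beta\|\Pi_{\Omega^i}\tilde g^i_t\|^2 \lesssim \int_t^{t+\Delta}|\tilde y^i_\tau|^2\,d\tau + (\text{cross terms})$; since the right side vanishes as $t\to\infty$ by (c) and the $L^2$ integrability from (b), one concludes $\|\Pi_{\Omega^i}(\hat g^i_t - g)\|_{H_X} = \|\Pi_{\Omega^i}\tilde g^i_t\|_{H_X}\to 0$. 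The main obstacle is making this last step rigorous: the naive estimate of the cross terms uses $\|\tilde g^i_t-\tilde g^i_\tau\|$, which is bounded but not small, so one cannot simply discard it. The standard fix — which I would carry out — is to expand $(E^*_{x^i_\tau}E_{x^i_\tau}\tilde g^i_\tau,\tilde g^i_t)_{H_X} = \tilde y^i_\tau\,(E_{x^i_\tau}\tilde g^i_t)$ and then write $E_{x^i_\tau}\tilde g^i_t = E_{x^i_\tau}\tilde g^i_\tau + E_{x^i_\tau}(\tilde g^i_t-\tilde g^i_\tau)$ with the increment rewritten as an integral of $\dot{\tilde g}^i = -\gamma E^*_{x^i_s}E_{x^i_s}\tilde g^i$, so every term ultimately carries a factor of some output error $\tilde y^i_s$ over the window $[t,t+\Delta]$; then Cauchy--Schwarz and (b)--(c) force the whole right side to zero. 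This is the essential technical content of the theorem and the step I would spend the most care on; the remaining pieces are routine given the centralized results of part I.
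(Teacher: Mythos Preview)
Your proposal is correct and follows essentially the same approach as the paper. The paper itself gives only a one-paragraph sketch after the theorem statement, noting that the results are the $N=1$ specialization of the centralized analysis from part I and that (a) follows from the Banach fixed point theorem, (b) from the common quadratic Lyapunov function, (c) from Barbalat's lemma, and (d) from the PE condition; your outline matches this exactly, and in fact supplies considerably more detail (the $\dot V = -\gamma|\tilde y^i_t|^2$ computation, the $L^2$ integrability, the uniform-continuity verification for Barbalat, and the cross-term handling in (d)) than the paper provides.
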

\noindent Statement (a) gives a sufficient condition to ensure the existence of the solution of Equation \eqref{eq:err_dec}. The proof is a standard argument using Banach fixed point theorem. Statement (b) claims the stability of the error system in Equation \eqref{eq:err_dec}, which can be proven using the common quadratic Lyapunov function. Statement (c) is an extension of (b). The additional conditions imposed to the trajectory $x^i_t$ allow us to use Barbalat's lemma to prove the asymptotic stability. Statement (d) claims the convergence of function estimate $\hat{g}^i_t$ if an RKH subspace is persistently excited by the trajectory $x^i_t$.

%{\color{red} Expand this!!! Parts (a) and (c) follow directly from setting $N=1$ in Theorems 3.1 and 3.2 in \cite{cent_est}, respectively. \textcolor{blue}{Need results from Jia's recent paper to prove part (b),(d)}. }

In Theorem \ref{th:ideal_strict}, we have elected to interpret Equation \eqref{eq: est_eqn} as defining a trajectory in $H_X$. Meanwhile, we can also state this theorem by interpreting Equation \eqref{eq: est_eqn} as defining a trajectory in $\RH_\Omega$ or in $\RH_{\Omega^i}$. In this case, part (a) above guarantees that $\hat{g}^i\in C([0,T],\RH_{\Omega^i})$, for instance, and part (d) concludes that $\|\RH_{\Omega^i}\Pi_{\Omega^i}(\hat{g}^i_t-g)\|_{\RH_{\Omega^i}}$  converges to zero as $t\rightarrow \infty$.

With these properties of the ideal local estimates, the proof of convergence of the 
 collective estimate $\hat{g}(t)$ defined in Equation \eqref{eq:strict3} to the unknown function $g$ is fairly direct. So far we require that the $H_X$ is invariant under multiplication by $\Psi^i$. In the proof of the follwoing corollary, it will likewise be necessary to establish that for each $i$, the multiplication operator $M_{\psi^i}$ that is induced by $\Psi^i$, which is defined by $(M_\psi h)(z):=\Psi^i(z)h(z)$, is in fact a bounded operator, $M_{\Psi^i}\in \calL(H_X)$.  
\begin{corollary}
\label{cor:strict_error}
Suppose the hypotheses of Theorem \ref{th:ideal_strict} hold. Then we have
\begin{align}
    \lim_{t\rightarrow \infty} \|R_\Omega(\hat{g}(t)-g)\|_{\RH_\Omega}=0
\end{align}
for all $g\in H_\Omega$. 
\end{corollary}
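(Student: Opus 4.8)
The plan is to reduce the convergence of the collective estimate to the agent-wise convergence already provided by Theorem~\ref{th:ideal_strict}(d), using the subordinacy of the partition of unity to localize each error contribution. As a preliminary I would record that every multiplication operator $M_{\Psi^i}$ is bounded on $\RH_\Omega$. Since $\RH_\Omega$ is assumed invariant under multiplication by $\Psi^i$, boundedness follows from the closed graph theorem: if $f_n\to f$ and $M_{\Psi^i}f_n\to h$ in $\RH_\Omega$, then convergence in the RKHS $\RH_\Omega$ implies pointwise convergence, so $(M_{\Psi^i}f_n)(z)=\Psi^i(z)f_n(z)\to\Psi^i(z)f(z)$ and also $(M_{\Psi^i}f_n)(z)\to h(z)$ for every $z\in\Omega$, whence $h=M_{\Psi^i}f$ and the graph is closed. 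Write $C_i:=\norm{M_{\Psi^i}}_{\calL(\RH_\Omega)}<\infty$.

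Next I would use $\sum_{i=1}^N\Psi^i\equiv 1$ on $\Omega$ together with $\hat g_t=\sum_{i=1}^N\Psi^i\hat g^i_t$ to write, in $\RH_\Omega$, $R_\Omega(\hat g_t-g)=\sum_{i=1}^N\Psi^i R_\Omega(\hat g^i_t-g)=\sum_{i=1}^N M_{\Psi^i}R_\Omega(\hat g^i_t-g)$, and then apply the triangle inequality to get $\norm{R_\Omega(\hat g_t-g)}_{\RH_\Omega}\le\sum_{i=1}^N\norm{M_{\Psi^i}R_\Omega(\hat g^i_t-g)}_{\RH_\Omega}$.

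The heart of the argument is the $i$-th summand. Here I would use $\operatorname{supp}\Psi^i\subseteq\kappa^i\subseteq\Omega^i$ to replace $R_\Omega(\hat g^i_t-g)$ by $R_\Omega\Pi_{\Omega^i}(\hat g^i_t-g)$ inside the product with $\Psi^i$ without changing it. Indeed, from $\Pi_A=\Ext_A R_A$ and $R_A\Ext_A=\mathrm{id}$ on $\RH_A$ (recalled in Section~\ref{sec: bg}) one obtains $R_{\Omega^i}\Pi_{\Omega^i}=R_{\Omega^i}$, so $h:=\hat g^i_t-g$ and $\Pi_{\Omega^i}h$ have the same restriction to $\Omega^i$ and in particular agree on $\kappa^i$; hence $\Psi^i R_\Omega h=\Psi^i R_\Omega\Pi_{\Omega^i}h$ as elements of $\RH_\Omega$, both vanishing off $\kappa^i$. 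Since $\Omega^i\subseteq\Omega$ gives $H_{\Omega^i}\subseteq H_\Omega$ and hence $\Pi_\Omega\Pi_{\Omega^i}=\Pi_{\Omega^i}$, and since $\norm{R_\Omega u}_{\RH_\Omega}=\norm{\Ext_\Omega R_\Omega u}_{\HsubX}=\norm{\Pi_\Omega u}_{\HsubX}$, it follows that $\norm{M_{\Psi^i}R_\Omega(\hat g^i_t-g)}_{\RH_\Omega}\le C_i\norm{R_\Omega\Pi_{\Omega^i}(\hat g^i_t-g)}_{\RH_\Omega}=C_i\norm{\Pi_{\Omega^i}(\hat g^i_t-g)}_{\HsubX}$.

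Finally I would sum over the finitely many agents and invoke Theorem~\ref{th:ideal_strict}(d), whose hypotheses are in force, so that $\norm{\Pi_{\Omega^i}(\hat g^i_t-g)}_{\HsubX}\to 0$ as $t\to\infty$ for each $i$; therefore $\norm{R_\Omega(\hat g_t-g)}_{\RH_\Omega}\le\sum_{i=1}^N C_i\norm{\Pi_{\Omega^i}(\hat g^i_t-g)}_{\HsubX}\to 0$, which is the claim. I expect the third paragraph to be the only genuine obstacle: the full restriction $R_\Omega(\hat g^i_t-g)$ carries no information about $\hat g^i_t-g$ over $\Omega\setminus\Omega^i$ and need not be small there, so it is essential that subordinacy of $\{\Psi^i\}$ lets the multiplier $\Psi^i$ annihilate exactly that part, leaving only $\Pi_{\Omega^i}(\hat g^i_t-g)$ — the quantity that persistent excitation of $\Omega^i$ actually controls. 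Everything else (the operator identities for $\Pi_A$, $\Ext_A$, $R_A$ and the finiteness of the $C_i$) is routine given Section~\ref{sec: bg}.
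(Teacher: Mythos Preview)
Your argument is correct and follows essentially the same route as the paper's proof: decompose the fused error via the partition of unity, use subordinacy of $\Psi^i$ to $\Omega^i$ to replace $R_\Omega(\hat g^i_t-g)$ by $R_\Omega\Pi_{\Omega^i}(\hat g^i_t-g)$ under the multiplier, bound by $\|M_{\Psi^i}\|$, and invoke Theorem~\ref{th:ideal_strict}(d). The only notable difference is how boundedness of $M_{\Psi^i}$ is obtained: you use the closed graph theorem, whereas the paper appeals to the kernel multiplier criterion $\Psi^i(x)\knl(x,y)\Psi^i(y)\le c^2\knl(x,y)$ from \cite{paulsen2016introduction} together with $|\Psi^i|\le 1$ to get the explicit bound $\|M_{\Psi^i}\|\le 1$.
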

\begin{proof}
By definition we have the inequalities
\begin{align*}
    \|R_\Omega(\hat{g}(t)-g)\|_{\RH_\Omega } &:= \left \|\sum_{i=1}^N \Psi^iR_\Omega \hat{g}^i(t) - \sum_{i=1}^N \Psi^i R_\Omega g\right \|_{\RH_\Omega} \\
    &\leq 
    \left \| \sum_{i=1}^N \Psi^iR_\Omega\Pi_{\Omega^i}(\hat{g}^i(t)-g) \right \|_{\RH_\Omega} \\
    & \hspace*{.25in} 
    +
    \left \|
    \sum_{i=1}^N \Psi^iR_\Omega (I-\Pi_{\Omega^i})(\hat{g}^i(t)-g)
    \right \|_{\RH_\Omega}
\end{align*}
But the function $\Psi^i$ is zero outside of $\Omega^i$. By the definition of the projection  $\Pi_{\Omega^i}:
\HsubX\rightarrow \calH_{\Omega^i}$, we also know that $\left ((I-\Pi_{\Omega^i})h\right )(x)=0$ for all $x\in \Omega^i$ and any $h\in \HsubX$. This is true since 
$$
H_{\Omega}^\perp:=\{ h\in H_X \ | \ h(z)=0 \quad \text{ for all } z\in \Omega \}. 
$$
It then follows that 
\begin{align*}
    \left (  \sum_{i=1}^N \Psi^iR_\Omega(I-\Pi_{\Omega^i})(\hat{g}^i(t)-g)\right )(x)=0
\end{align*}
for all $x\in X$. 
We conclude that 
\begin{align*}
    \|R_\Omega \hat{g}(t)-g\|_{\RH_\Omega} & \leq 
     \left \| \sum_{i=1}^N \Psi^iR_\Omega \Pi_{\Omega^i}(\hat{g}-g) \right \|_{\HsubX} \\
     & \leq \sum_{i=1}^N \|M_{\Psi^i}\| \left \|
     R_\Omega \Pi_{\Omega^i} (\hat{g}^i(t)-g) 
     \right \|_{\RH_\Omega},
\end{align*}
if we can show that the multiplication operator $M_{\Psi^i}$ is bounded. From Theorem 5.21 on page 78 of \cite{paulsen2016introduction}, $M_{\Psi^i}$ is a bounded operator on $\RH_\Omega$ if and only if these is a constant $c>0$ such that 
$$
\Psi^i(x)\knl(x,y)\Psi^i(y)\leq c^2 \knl(x,y) \quad \text{ for all } x,y \in \Omega. 
$$
But since $|\Psi(z)|\leq 1$, we can just pick $c=1$. Moreover, the norm $\|M_{\Psi^i}\|$ is just the smallest constant $c$ for which this ineqality holds. This means that in fact we have $\|M_{\Psi^i}\|\leq 1$ for all $1\leq i\leq N$. 
Now the right hand side converges to zero from Theorem \ref{th:ideal_strict}. 
\end{proof}
%------------------------------------------------------------------
% Finite Dimensional Approximations
%------------------------------------------------------------------
\subsection{Finite Dimensional Approximations}
\label{sec:strict_dec_FD}
 
The local approximations $\hat{g}^i(t)$ in Equation \eqref{eq:strict1} evolve in an infinite dimensional state space. We must choose some finite dimensional basis for actual implementations. 

\subsubsection{Bases and Approximation}
\label{sec:samples_bases}
The finite dimensional approximations  in the paper make use of a number of well-known properties of the kernel bases $\knl_z$ and the interpolation of functions in a native space  over discrete sets. 
Let us write $\Xi^i_L$ for  a set of $L$ centers  contained in the subdomain $\Omega^i$.  In our approach the centers will ordinarily be taken as samples of the state $x^i(t)$ of agent $i$  collected at some collection of  discrete times $\TT^i_L:=\{ t^i_{L,\ell}\in \RR^+ \ | \ 0\leq \ell \leq L-1 \}$,  
\begin{align*}
\Xi^i_L:=\left \{ \xi_{L,\ell}:=x^i(t^i_{L,\ell}), \ \ t^i_{L,\ell}\in \TT^i_L \right \}
\subset  \Omega^i \subset X. 
\end{align*}
We write 
$
\Xi^i:=\bigcup_{L\geq 1} \Xi^i_L, 
$
for the set of all the centers  over $\Omega^i$, and it is assumed that 
$
\Omega^i \subset \overline{\Xi^i}.
$
We define the collection of $L$  centers  for all of the  subdomains  by  $\Xi_L:=\bigcup_{i=1}^N \Xi^i_L$. 
Asymptotically, the full collection of centers  over all subdomains  is then $\Xi:=\bigcup_{0\leq L<\infty}\Xi_L$.  
We use the sets of centers  $\Xi^i_L$ and $\Xi_L$ to build finite  dimensional spaces $H^i_L$ and $H_L$, respectively. For each agent $i=1:N$, the space of approximants is 
\begin{align*}
    \HsubL^i&:= \text{span}
    \left \{ \knl(\xi^i_{L,\ell},\cdot) \ | \ \xi^i_{L,\ell}\in \Xi^i_L\right \},
\end{align*}
    and we similarly define the global space of approximants as
\begin{align*}
    \HsubL&:=\text{span} 
    \left \{ \knl(\xi^i_{L,\ell},\cdot) \ | \ \xi^i_{L,\ell} \in \Xi_L.
    \right \}
\end{align*}
% \begin{align*}
%     \HsubL^i&:= \text{span}
%     \left \{ \knl(\xi^i_{L,\ell},\cdot) \ | \ \xi^i_{L,\ell}\in \Xi^i_L, \ 0\leq \ell \leq L-1\}\right \}, \\
%     \HsubL&:=\text{span} 
%     \left \{ \knl(\xi^i_{L,\ell},\cdot) \ | \ \xi^i_{L,\ell} \in \Xi_L^i, \ 0\leq \ell \leq L-1,  \ 1\leq i \leq N
%     \right \}.
% \end{align*}
The operators  $\Pi_L$ and $\Pi_L^i$ are the  $\HsubX$-orthogonal projections onto the two spaces $\HsubL^i$ and $\HsubL$, respectively. 
The finite dimensional spaces $\HsubL^i$ and $\HsubL$ are contained in  the closed subspaces 
\begin{align*}
  {\mathcal{H}}_{\Omega^i} &:= \overline{\text{span}\{\knl(\xi,\cdot)\ | \ \xi \in \Omega^i \} } \subset \HsubX, \\
   {\mathcal{H}}_{\Omega} &: =\overline{ \text{span} \{ \knl(\xi,\cdot) \ | \ \xi \in \Omega \} } \subset \HsubX
\end{align*}
The approximation properties of these finite dimensional spaces are most frequently studied in terms of the fill distance of a finite set $Z$ in another set $A$, which is given by 
\begin{align*}
    h_{Z,A}:=\sup_{x\in A} \min_{z\in Z}d_A(x,z)
\end{align*}
with $d_X$ the metric on $A$. In the discussion that follows, we will have particular need of the fill distances $h_{\Xi^i_L,\Omega^i}$ and $h_{\Xi_L,\Omega}$. The fill distance will be used in conjunction with the power function of the discrete set $Z$, which takes the form 
\begin{align*}
P_{\knl,Z}:=\min_{\alpha\in \RR^J} \left \|\knl(x,\cdot) - \sum_{i=1}^J \alpha_L \knl(z_j,\cdot) \right \|_{R_\Omega(H_X)}
\end{align*}
for a set $ Z:=\{z_1,\ldots,z_J\}$ having $J$ elements. 
%%------------------------------------------------------------------
% Local Approximations for each Agent
%------------------------------------------------------------------
\subsection{Local Approximations for Each Agent}
In view of Equation \eqref{eq:strict1a}, each agent $i$ for $i=1,\ldots,N$ constructs the finite dimensional approximation $\hat{g}^i_L(t)\in \calH_L^i$ as the solution of the equation \begin{align}\label{eq:local_est}
    \dot{\hat{g}}^i_L(t)&=
    \gamma \Pi^i_L \calA^i(t)\tilde{g}^i_L(t) & \text{subject to} \quad  \hat{g}^i_L(0)=\hat{g}^i_{L,0},
\end{align}
where $\Pi^i_L:\HsubX\rightarrow \calH_L^i$ is the $\HsubX$-orthogonal projection onto $\calH_L^i$ and $\tilde{g}^i_L(t):=g-\hat{g}_L^i(t)$ is the estimation error, which evolves over time according to 
\begin{align*}
\dot{\tilde{g}}^i_L(t)& =-\gamma \Pi^i_L \calA^i(t) \tilde{g}^i_L(t), & \text{ subject to } \quad \tilde{g}^i_L(0)=\tilde{g}^i_{L,0}. 
\end{align*}
Rates of convergence for the team follow from some fairly standard assumptions regarding the kernel $\knl$ that induces $H_X$. Specifically, it can often be shown that we have 
estimates such as 
\begin{align*}
    P^2_{\knl,Z}(x)\leq F(h_{Z,A}),
\end{align*}
where the function $F(\cdot)$ that defines the upper bound is known or has been characterized. A wide variety of such bounds can be found in Table 11.1 of \cite{wendland2004scattered}. Here we just state the result for Wendland's compactly supported radial basis functions, in which case  it is known that
\begin{align}
P^2_{\knl,\Xi^i_L}\lesssim h^{2s+1}_{\Xi^i_L,\Omega^i}.
\label{eq:power_hp}
\end{align}
We have the following results that establishes a rate of convergence for the fused estimate derived from the local estimates. 
\begin{theorem}
\label{th:approx_rate}
Suppose that $\Omega, \Omega^i$ are bounded, open domains with a sufficiently regular boundary, the kernel $\knl\in C(\Omega\times \Omega)$ is positive definite, and the initial condition $\hat{g}^i_0$ is smooth enough. Further suppose that the power function $P_{\knl,\Xi^i_L}$ of the kernel $\knl$ that induces $H_X$ satisfies a bound of the type in Equation \eqref{eq:power_hp} with $P^2_{\knl,\Omega^i}\lesssim h^p_{\Omega^i,\Omega}$ for some power $p>0$. Then we have 
\begin{align*}
    \|\bar{g}^i_{L,t}\|_{R_{\Omega^i}(H_X)}^2 \leq C
    \left (1+\gamma \|\tilde{y}^i\|^2_{L^2([0,t],\RR)} \right )h^{2p}_{\Xi^i_L,\Omega^i}
\end{align*}
for all $t\in [0,T]$. 
If in addition the multiplication operators $M_{\Psi^i}$ are each bounded, then there is a constant $\bar{C}(t)>0$ and $H_0>0$  such that we have 
\begin{align*}
    \|\bar{g}_{L,t}\|_{R_\Omega(H_X)} \lesssim \bar{C}(t) \max_{i=1,\ldots,N} h^{2p}_{\Xi^i_L,\Omega^i} 
\end{align*}
for each $t\in [0,T]$
\end{theorem}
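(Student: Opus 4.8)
The plan is to establish the per-agent bound and then assemble the collective bound from it through the partition of unity. I would first reduce the three norms in play to the native-space norm. With the standard initialization $\hat{g}^i_{L,0}=\Pi^i_L\hat{g}^i_0$ (in practice $\hat{g}^i_0=\hat{g}^i_{L,0}\equiv 0$), and reading ``$\hat{g}^i_0$ smooth enough'' as $\hat{g}^i_0\in\calH_{\Omega^i}$ together with $\|(I-\Pi^i_L)\hat{g}^i_0\|_{\HsubX}\lesssim h^{p}_{\Xi^i_L,\Omega^i}$ (automatic when $\hat{g}^i_0\equiv 0$), I observe that since $\dot{\hat{g}}^i_t=\gamma\tilde{y}^i_t\,\knl_{x^i_t}$ with $x^i_t\in\Omega^i$, the ideal estimate $\hat{g}^i_t=\hat{g}^i_0+\gamma\int_0^t\tilde{y}^i_\tau\,\knl_{x^i_\tau}\,d\tau$ lies in $\calH_{\Omega^i}$ for all $t$, hence so does $\bar{g}^i_{L,t}:=\hat{g}^i_t-\hat{g}^i_{L,t}$ (recall $\hat{g}^i_{L,t}\in\calH_L^i\subset\calH_{\Omega^i}\subseteq H_\Omega$). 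Using $\Pi_A=\Ext_AR_A$ this gives $\|\bar{g}^i_{L,t}\|_{R_{\Omega^i}(H_X)}=\|\bar{g}^i_{L,t}\|_{\HsubX}=\|R_\Omega\bar{g}^i_{L,t}\|_{R_\Omega(H_X)}$, so I may work throughout in $\HsubX$. I would then split $\bar{g}^i_{L,t}=\eta^i_{L,t}+\varepsilon^i_{L,t}$ along $\HsubX=(\calH_L^i)^\perp\oplus\calH_L^i$, where $\eta^i_{L,t}:=(I-\Pi^i_L)\hat{g}^i_t$ and $\varepsilon^i_{L,t}:=\Pi^i_L\hat{g}^i_t-\hat{g}^i_{L,t}\in\calH_L^i$, so that $\|\bar{g}^i_{L,t}\|^2_{\HsubX}=\|\eta^i_{L,t}\|^2_{\HsubX}+\|\varepsilon^i_{L,t}\|^2_{\HsubX}$.

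The heart of the argument is the estimate on $\|\eta^i_{L,t}\|$. Subtracting \eqref{eq:local_est} from \eqref{eq:strict1a} and using $E^*_{x^i_t}E_{x^i_t}\tilde{g}^i_t=\tilde{y}^i_t\,\knl_{x^i_t}$ yields $\dot{\bar{g}}^i_{L,t}=-\gamma\Pi^i_LE^*_{x^i_t}E_{x^i_t}\bar{g}^i_{L,t}+\gamma\tilde{y}^i_t(I-\Pi^i_L)\knl_{x^i_t}$; applying $I-\Pi^i_L$ annihilates the first term and leaves the decoupled equation $\dot{\eta}^i_{L,t}=\gamma\tilde{y}^i_t(I-\Pi^i_L)\knl_{x^i_t}$. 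Since the best $\calH_L^i$-approximant of $\knl_z$ is $\Pi^i_L\knl_z$, one has $\|(I-\Pi^i_L)\knl_z\|_{\HsubX}=P_{\knl,\Xi^i_L}(z)\le\|P_{\knl,\Xi^i_L}\|_{L^\infty(\Omega^i)}$, so either a direct Cauchy--Schwarz bound on $\eta^i_{L,t}=\eta^i_{L,0}+\gamma\int_0^t\tilde{y}^i_\tau(I-\Pi^i_L)\knl_{x^i_\tau}\,d\tau$ (using $\int_0^t|\tilde{y}^i_\tau|\,d\tau\le\sqrt{t}\,\|\tilde{y}^i\|_{L^2([0,t],\RR)}$) or a Gr\"onwall argument on $\frac{d}{dt}\|\eta^i_{L,t}\|^2\le\gamma\|\eta^i_{L,t}\|^2+\gamma\|P_{\knl,\Xi^i_L}\|^2_{L^\infty(\Omega^i)}(\tilde{y}^i_t)^2$ gives, for $t\in[0,T]$,
\[
\|\eta^i_{L,t}\|^2_{\HsubX}\ \le\ C\bigl(1+\gamma\|\tilde{y}^i\|^2_{L^2([0,t],\RR)}\bigr)\,\|P_{\knl,\Xi^i_L}\|^2_{L^\infty(\Omega^i)},
\]
with $C=C(T)$ absorbing $e^{\gamma T}$ (or $\max\{1,\gamma T\}$) and the initial-smoothness constant.

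Next I would handle the in-space part and assemble. Projecting the $\dot{\bar{g}}^i_{L,t}$ equation with $\Pi^i_L$ gives $\dot{\varepsilon}^i_{L,t}=-\gamma\Pi^i_LE^*_{x^i_t}E_{x^i_t}\bar{g}^i_{L,t}$; writing $u:=\eta^i_{L,t}(x^i_t)$, $v:=\varepsilon^i_{L,t}(x^i_t)$ and completing the square, $\frac{d}{dt}\|\varepsilon^i_{L,t}\|^2_{\HsubX}=-2\gamma(u+v)v\le\tfrac{\gamma}{2}u^2\le\tfrac{\gamma}{2}\|P_{\knl,\Xi^i_L}\|^2_{L^\infty(\Omega^i)}\|\eta^i_{L,t}\|^2_{\HsubX}$, where I used $|u|=|((I-\Pi^i_L)\hat{g}^i_t,\knl_{x^i_t})|\le\|\eta^i_{L,t}\|\,\|(I-\Pi^i_L)\knl_{x^i_t}\|$. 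Integrating and inserting the $\eta$-bound shows $\|\varepsilon^i_{L,t}\|^2_{\HsubX}=O(\|P_{\knl,\Xi^i_L}\|^4_{L^\infty(\Omega^i)})$, negligible against $\|\eta^i_{L,t}\|^2$; adding the two pieces and using $\|P_{\knl,\Xi^i_L}\|^2_{L^\infty(\Omega^i)}\lesssim h^{2p}_{\Xi^i_L,\Omega^i}$ yields the first displayed inequality (here $H_0$ is any fill distance below which the power-function estimate holds, and $\gamma\|\tilde{y}^i\|^2_{L^2([0,t],\RR)}\le\tfrac12\|\tilde{g}^i_0\|^2_{\HsubX}$ by integrating $\frac{d}{dt}\|\tilde{g}^i_t\|^2=-2\gamma(\tilde{y}^i_t)^2$ from \eqref{eq:err_dec}, so the right side is finite). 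For the collective bound I would use $\hat{g}_{L,t}=\sum_{i=1}^N\Psi^i\hat{g}^i_{L,t}$, so by \eqref{eq:strict3}, $R_\Omega\bar{g}_{L,t}=\sum_{i=1}^NM_{\Psi^i}\bigl(R_\Omega\bar{g}^i_{L,t}\bigr)$; the triangle inequality together with $\|M_{\Psi^i}\|_{\calL(R_\Omega(H_X))}\le 1$ (exactly as in the proof of Corollary~\ref{cor:strict_error}, or the stated boundedness hypothesis) and the norm identity $\|R_\Omega\bar{g}^i_{L,t}\|_{R_\Omega(H_X)}=\|\bar{g}^i_{L,t}\|_{R_{\Omega^i}(H_X)}$ from the first paragraph give $\|R_\Omega\bar{g}_{L,t}\|_{R_\Omega(H_X)}\le\sum_{i=1}^N\|\bar{g}^i_{L,t}\|_{R_{\Omega^i}(H_X)}\le N\max_i\bigl(C(1+\gamma\|\tilde{y}^i\|^2_{L^2([0,t],\RR)})\bigr)^{1/2}h^{p}_{\Xi^i_L,\Omega^i}$, i.e.\ the second assertion with $\bar{C}(t):=N\max_i\bigl(C(1+\gamma\|\tilde{y}^i\|^2_{L^2([0,t],\RR)})\bigr)^{1/2}$, up to the harmless relabeling of the exponent $p$.

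The step I expect to be the main obstacle is the $\|\eta^i_{L,t}\|$ estimate: one must see that the projection error of the \emph{ideal} estimate inherits the fill-distance rate even though $\hat{g}^i_t$ is not assumed to lie in the finite-dimensional space $\calH_L^i$. The resolution is precisely the representation $\hat{g}^i_t-\hat{g}^i_0=\gamma\int_0^t\tilde{y}^i_\tau\,\knl_{x^i_\tau}\,d\tau$ as an $\HsubX$-valued integral of kernel sections along the trajectory \emph{inside} $\Omega^i$, which upon applying $I-\Pi^i_L$ collapses the question to the single scalar $\sup_{z\in\Omega^i}\|(I-\Pi^i_L)\knl_z\|_{\HsubX}=\|P_{\knl,\Xi^i_L}\|_{L^\infty(\Omega^i)}$ --- the power function controlled by the fill distance. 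Two secondary points require care: ensuring $\hat{g}^i_t$, and hence $\bar{g}^i_{L,t}$, stays in $\calH_{\Omega^i}$ (so that the $R_{\Omega^i}$-, $R_\Omega$-, and $\HsubX$-norms coincide on it), which is where $\hat{g}^i_0\in\calH_{\Omega^i}$ (or $\hat{g}^i_0\equiv 0$) is used; and the bookkeeping of the exponents $p$ versus $2p$, which depends on whether one estimates $\|\bar{g}^i_{L,t}\|$ or its square.
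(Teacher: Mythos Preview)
Your argument is sound and in fact considerably more detailed than what the paper supplies. The paper's proof is essentially a one-line appeal to external results (``applying Theorem~\ref{th:error1} and Corollary~\ref{cor:cor1} for the trivial case that we have $N=1$ and $\Omega=\Omega^i$''); those labels are dangling references to the companion Part~I paper, and the derivation of the collective bound from the per-agent bound is not written out at all. Your proof fills in precisely what that external machinery must contain: the orthogonal decomposition $\bar{g}^i_{L,t}=\eta^i_{L,t}+\varepsilon^i_{L,t}$ along $\calH^i_L$, the identification $\|(I-\Pi^i_L)\knl_z\|_{\HsubX}=P_{\knl,\Xi^i_L}(z)$, the decoupled evolution $\dot{\eta}^i_{L,t}=\gamma\tilde{y}^i_t(I-\Pi^i_L)\knl_{x^i_t}$ obtained by projecting the error equation, and the observation that $\varepsilon^i_{L,t}$ is higher order. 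The assembly of the collective bound via $\bar{g}_{L,t}=\sum_i\Psi^i\bar{g}^i_{L,t}$ and the multiplier bound $\|M_{\Psi^i}\|\le 1$ mirrors the argument in Corollary~\ref{cor:strict_error}, which the paper presumably intends but does not spell out here. In short, the two are aligned in structure and conclusion; your version is self-contained where the paper's is a citation.

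One minor remark: you correctly flag the exponent bookkeeping. As stated, the second displayed inequality in the theorem bounds the \emph{unsquared} norm by $h^{2p}$, whereas your (correct) chain of inequalities gives $h^{p}$ for the norm and $h^{2p}$ for its square; this is a cosmetic inconsistency in the theorem statement rather than a flaw in your argument.
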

\begin{proof}
 The rate of convergence of the finite dimensional approximations $\hat{g}^i_L(t)$ to the ideal estimate $\hat{g}^i(t)$ can be bounded by applying Theorem \ref{th:error1} and Corollary \ref{cor:cor1} for the trivial case that we have $N=1$ and $\Omega=\Omega^i$. We obtain
$$
\|\bar{g}^i_L(t)\|^2_{\HsubX}\leq C \left ( 1+\gamma \|\tilde{y}^i\|^2_{L^2([0,t],
\RR)}\right ) h^{2p}_{\Xi^i_L,\Omega^i}. 
$$
\end{proof}
%------------------------------------------------------------------
% Discrete-time Evolution Law
%------------------------------------------------------------------
\subsection{Discrete-time Evolution Law} Given that $\ghat_L^i(t)\in \HsubL^i$, we can express the estimator for each agent $i=1{:}N$ as
\begin{align}\label{eq:lin_exp}
     \ghat_{L,t}^i(\cdot) = \sum_{\ell = 1}^L\alpha_{t,\ell}\calk({\xi_{\ell},\cdot)}.
\end{align}
It follows from the derivation in Section 3.2.3 of \cite{centralized_paper}, that the coefficients $\alpha^i_t \in \bbR^L$ evolve over time according to
\begin{align}\label{eq:ct_alpha_evo}
    \dot{\alpha}_t^i = \gamma \KK^{-1}(\Xi_L^i,\Xi_L^i)^{-1}\KK\left(\Xi_L^i,x_t^i)(y_t^i-\KK(x_t^i,\Xi_L^i)\alpha^i_t\right)
\end{align}
where the kernel matrices are defined element-wise via $[\KK(\Xi_L^i,\Xi_L^i)]_{m,n} = \calk(\xi_m,\xi_n)$ for $m,n=1{:}L$ and $[\KK(\Xi_L^i,x_t^i)]_{m} = \calk(\xi_m,x_t^i)$ for $m=1{:}L$. In practice, the evolution law \eqref{eq:ct_alpha_evo} is carried out using a discrete time multistep integrator such that for each time instance $k = 1,2,...$ the evolution takes the form
\begin{align}\label{eq:dt_alpha_evo}
    \alpha_{k+1}^i = \alpha_k^i +h \gamma \sum_{s=1}^q a_s \calk(x_{k-s}^i,\cdot)(y_{k-s}^i-\ghat_{k-s}^i(x_{k-s}^i)).
\end{align}
The order $q$ and associated coefficients $\mathbf{a} = \{a_s\}_{s=1}^q$ are determined by the particular multistep method used; see \cite[Ch. 2.24]{butcher2016numerical} for details. For simplicity, we consider a fixed time step $h$, although a time-varying step size could also be used.
%------------------------------------------------------------------
% Basis Enrichment
%------------------------------------------------------------------
\subsection{Basis Enrichment}
When agent $i$ obtains a new input-sample pair $(x,y)$, it must decide whether to add $x$ to the set of basis centers $\Xi^i_L$. To do so, we introduce a notion of novelty and only sufficiently novel inputs are added to the $\Xi^i_L$. As done in \cite{csato2002sparse,gao2019gaussian,keplerapproach} we make the notion of novelty precise in terms of the squared norm of the residual of $k_x$ projected onto $\HsubL^i$ given by
\begin{align} \label{eq:nov_metric}
    \begin{split}
        \epsilon^2 &= \Vert \calk_x - \Pi_L^i \calk_x \Vert_{\HsubX}^2 \\
                 &= \calk(x,x) - \KK(x,\Xi_L^i)\KK(\Xi_L^i,\Xi_L^i)^{-1}\KK(\Xi_L^i,x). 
    \end{split}
\end{align}
Thus, if $\epsilon$ is greater than some user-defined $\bar{\epsilon}>0$, then $x$ is considered sufficiently novel and added to $\Xi_L^i$. In practice,  $\bar{\epsilon}$ can be tuned to ensure the Gram matrix $\KK(\Xi_L^i,\Xi_L^i)$ is well-conditioned as is done in \cite{csato2002sparse}. 

Suppose at time $k := t_k$ the set of basis centers $\Xi_{L_k}^i$ contains $L_k = L-1$ elements for any arbitrary agent $i=1{:}N$. Whenever a new element $\xi \in X$ is added such that the set of centers becomes $\Xi_{L_k}^i \cup {\xi}$. Associated with the augmented set of centers, is a new set of prediction coefficients of size $L$. We initialize the new coefficients so as to interpolate the spatial field at the locations contained in the set of basis centers $\Xi_{L_k}^i$, i.e.
\begin{align*}
    \alpha^i_k = \KK(\Xi_{L_k}^i,\Xi_{L_k}^i)^{-1}\mathbf{g}^i,
\end{align*}
where $\mathbf{g}^i:=g(\Xi_{L_k}^i)\in \bbR^{L}$. The evolution of the prediction coefficients and the process of basis enrichment is summarized in Algorithm \ref{alg: training}.

\IncMargin{1em}
\begin{algorithm}
\SetKwInOut{Input}{input}\SetKwInOut{Output}{output}
\SetKwInOut{Input}{input}\SetKwInOut{Output}{output}
\Input{Estimator kernel $\calk$, kernel hyp. $\theta$, novelty threshold $\bar{\epsilon}>0$}
\Output{Prediction coefficients $\{\alpha^i_k\}_{i=1}^N$}
\BlankLine
\For{i=1:N}{
    Collect initial samples: $X_0^i,Y_0^i$ \;
    Init. centers \& sample set: $\Xi_{L_0}^i = X_0^i$; $\mathbf{g}^i = Y_0^i$\;
    Init. coefs.: $\alpha^i_0 = \KK(\Xi_{L_0}^i,\Xi_{L_0}^i)^{-1}\mathbf{g}^i$\;
}
\While{$\mathrm{collecting \ samples}$}{
    $k + 1 \rightarrow k$\;
    \For{ i = 1:N}{
        Sample spatial field: $(x_k^i, y_k^i)$ \;
        Coefficient update \eqref{eq:dt_alpha_evo}: $\alpha_k^i \rightarrow \alpha_{k+1}^i$ \;
        Calc. novelty of $x_k^i$ wrt $\Xi^i_{L_k}$ \eqref{eq:nov_metric}: $\epsilon$ \;
        \If{$\epsilon > \bar{\epsilon}$}{
            Update centers: $ \{\Xi_{L_k}^i \cup x_k^i\} \rightarrow \Xi_{L_k}^i$\; 
            Update samples: $\{\mathbf{g}^i \cup y_k^i \}\rightarrow \mathbf{g}^i$ \;
            Init. coefs.: $\alpha^i_k = \KK(\Xi_{L_k}^i,\Xi_{L_k}^i)^{-1}\mathbf{g}^i$ \;
        }
    }
}
\caption{Evolution of the prediction coefficients for the decentralized estimator.}\label{alg: training}
\end{algorithm}
\subsection{Fused Estimate}
Suppose at time $k := t_k$, each agent $i=1{:}N$ has its own set of basis centers $\Xi_{L_k^i}$ and associated coefficients $\{\alpha^i_{k,\ell}\}_{\ell = 1}^{L_k^i}$. In view of the ideal fused estimate \eqref{eq:strict3}, we define a computationally tractable approximation such that for any $x \in \Omega$
\begin{align}\label{eq:tract_est}
    \ghat_{k,L}(x) = \sum_{i=1}^N\psi^i(x) \sum_{\ell = 1}^{L_k^i}\alpha_{k,\ell}^i\calk(\xi_\ell^i,x),
\end{align}
where $\{\psi^i\}^N_{i=1}$ is the partition of unity.

In order for a single entity, whether that be an agent of the team or an independent processor, to compute the fused estimate \eqref{eq:tract_est}, all basis centers $\{\Xi_{L_k^i}\}_{i=1}^N$ and prediction coefficients $\{\alpha^i_k\}_{i = 1}^{N}$ must be communicated to the computing entity. Note that communication can potentially be reduced if the entity is only interested in making predictions in a particular subset $E \subset \Omega$ of the domain. For $x \in E$, the estimator \eqref{eq:tract_est} can be expressed as
\begin{align}\label{eq:red_tract_est}
    \ghat_{k,L}(x) = \sum_{j\in \mathcal{N}(x)} \psi^j(x) \sum_{\ell = 1}^{L_k^j}\alpha_{k,\ell}^j\calk(\xi_\ell^j,x).
\end{align}
where $\mathcal{N}(x) := \{j\in \{1,...,N\}\ |\ \psi^j(x) \neq 0\}$. Thus, a reduction in communication is achieved when $|\mathcal{N}(x)| < N$. Indeed, communication can be trivially eliminated by having each agent $i=1{:}N$ simply use its local estimator $\ghat^i_k$ over its assigned domain $\Omega^i$. However, this reduction in communication comes at the expense of a less-informed estimate.

\section{Numerical Illustration}
To illustrate the behavior of the proposed decentralized estimation method, we consider a spatial field $g$ that is an element of the RKHS $\HsubX$ induced by the 5/2 Matern Kernel \cite[Ch. 4.2]{williams2006gaussian}
\begin{align}\label{eq:matern}
    \calk(x,x') = \sigma^2\left( 1+\frac{\sqrt{5}r}{\ell}+\frac{5r^2}{3\ell^2} \right) \exp\left\{-\frac{\sqrt{5}}{\ell}\right\},
\end{align}
where $r := \Vert x-x' \Vert_{\bbR^d}$ and the set of hyperparameters $\theta = [\sigma,\ell]$ consists of the scale factor $\sigma$ and length-scale parameter $\ell$. We fix the kernel hyperparameters of $g$ as $\theta_g = [1,1]$, and construct $g$ by first discretizing the domain into a grid $\Xi_g \subset \Omega$ that consists of $N_g$ inputs. We then obtain a realization $\mathbf{g}\in\bbR^{N_g}$ of a zero-mean Gaussian random variable whose covariance matrix is $\KK(\Xi_g,\Xi_g)$. With $\mathbf{g}\in\bbR^{N_g}$ fixed, the spatial field can be expressed as
\begin{align*}
    g(\cdot) &= \sum_{i=1}^{N_g} \alpha_{i} \calk(\xi_i,\cdot),  & \alpha &= \KK(\Xi_g, \Xi_g)^{-1}\mathbf{g}.
\end{align*}
We hold $g$ fixed across all experiments and we consider decentralized estimation for the case of $N=4$ agents. The subdomain assignments $\{\Omega^i\}_{i=1}^4$ along with $g$ are depicted in Figure \ref{fig: gAndOmega}.

\begin{figure}
  \centering
  \includegraphics[width = \linewidth]{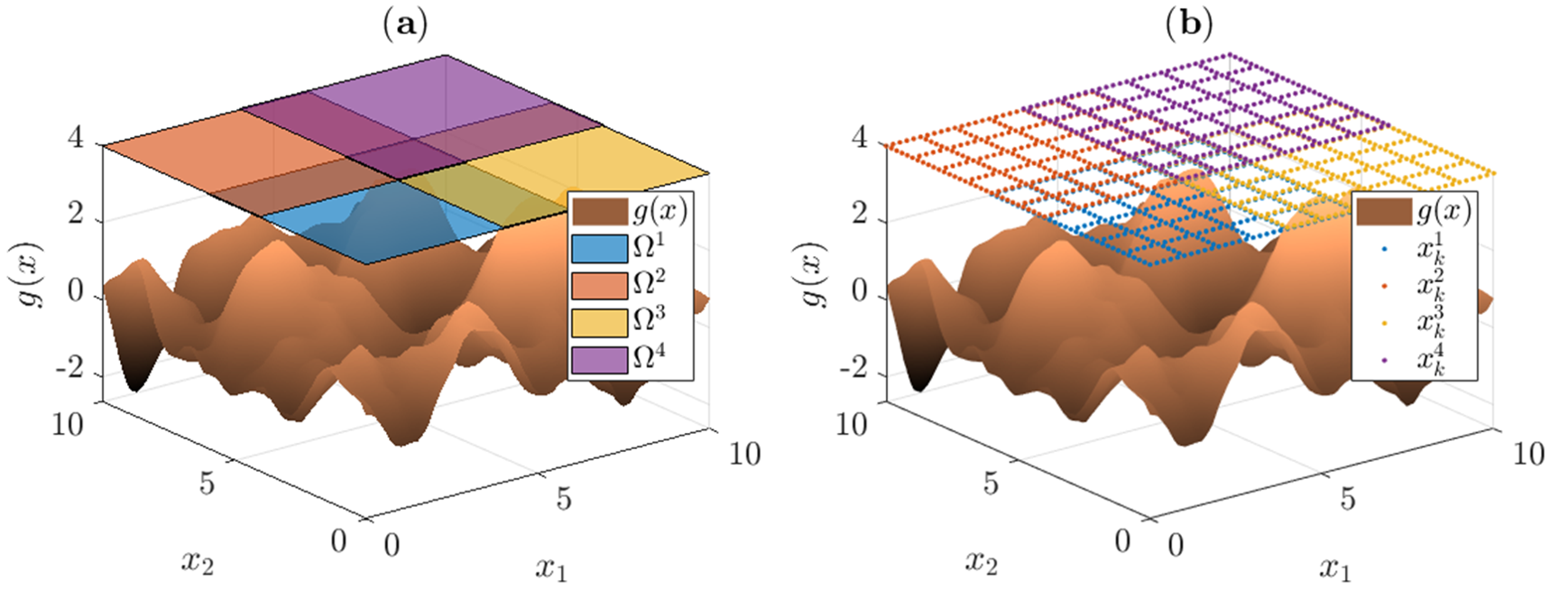}
  \caption{Spatial field used in the numerical experiments overlaid with (a) the subdomain assignments $\{\Omega^i\}_{i=1}^N$ for each of the $N=4$ agents and (b) the agent sampling locations $\{x_k^i\}_{i=1}^N$ when the grid resolution is 1.}
\label{fig: gAndOmega}
\end{figure}

\begin{figure*}
  \centering
  \includegraphics[width = \linewidth]{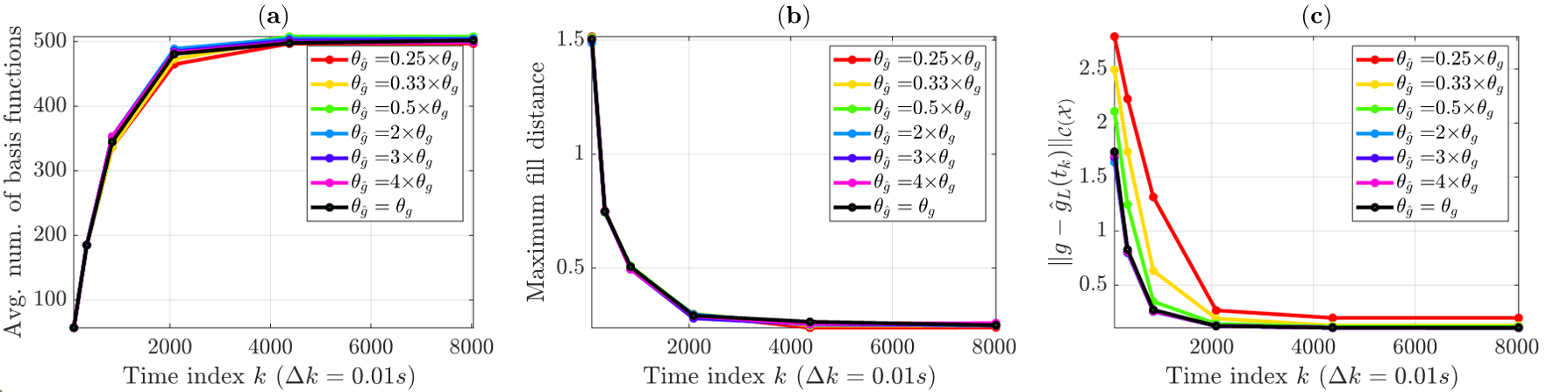}
  \caption{Depicted in subplot: (a) Average number of basis functions for each agent: $\frac{1}{N}\sum_{i=1}^N L_k^i$ (b) maximum fill distance across all agents: $\max_{i=1:N} h_{\Xi_{L_k}^i,\Omega^i}$ (c) Estiamtor error measure: $\Vert g-\ghat_L(t_k) \Vert_{\mathcal{C(X)}}$. The black traces correspond to the ideal case in which the estimator kernel hyperparamters $\theta_{\ghat}$ are equal to the spatial field kernel hyperparamters $\theta_g$, while the colored traces reflect a more practical setting in which $\theta_{\ghat}\neq \theta_g$}
\label{fig: seq_results}
\end{figure*}

To simulate Algorithm \ref{alg: training}, we fix the kernel $\calk$ of each estimator to coincide with the kernel \eqref{eq:matern} of $g$. In practice, the kernel hyperparameters $\theta_g$ are seldom known. We perform a simulation for the ideal case in which the kernel hyperparameters of each estimator $\theta_{\ghat}$ are equal to $\theta_g$, and we perform simulations in which $\theta_{\ghat} = c\theta_g$ for each $c\in\{\frac{1}{4},\frac{1}{3},\frac{1}{2},2,3,4\}$ to emulate a more realistic scenario in which $\theta_{\ghat} \neq \theta_g$. For each set of estimator hyperparameters $\theta_{\ghat}$, the novelty threshold $\bar{\epsilon}$ is tuned so that set of kernel centers each agent contains approximately 500 elements.

The sampling trajectory of each agent $i=1{:}N$ is constructed from a sequence of "lawnmower" grid-like paths with each path in the sequence traversing the entire subdomain. The first path in the sequence has a relatively coarse grid resolution, and the grid resolution of subsequent paths are progressively refined so that we can observe the behavior of the estimator $\ghat_L(t)$ as the sampling trajectory of each agent becomes dense in their respective subdomain and the fill distance tends to zero. See Figure \ref{fig: gAndOmega} (b) for the sampling trajectory associated with the lawnmower path whose grid resolution is 1. 

 We assume that when agents are simultaneously in regions where their respective subdomains overlap, the agents exchange their basis centers and prediction coefficients. For example, if at time $t$ both $x^1_{t}$ and $x_t^2$ are in $\Omega^1\cap\Omega^2$ then agents 1 and 2 share their most recent sets of basis centers and prediction coefficients. 
 
 We use $\eqref{eq:tract_est}$ to form the fused estimator $\ghat_L(t)$. Note that in general, the partition of unity is tailored to the application at hand subject to the conditions introduced in \eqref{eq:POU_conds}. We define the partition of unity so as to effectively average the estimates in a particular region. For example, $\psi^1(x)$ is defined such that
\begin{align*}
    \psi^1(x) = \left\{ \begin{matrix} 1, &  x\in \Omega^1 \setminus  \{\Omega^2 \cup  \Omega^3 \cup \Omega^4 \}\\
                                      1/2, &  x\in \{\Omega^1 \cap \Omega^2\} \setminus \{\Omega^3 \cup \Omega^4 \} \\
                                      1/2, &  x\in \{\Omega^1 \cap \Omega^3\} \setminus \{\Omega^2 \cup \Omega^4 \} \\
                                      1/4, &   x\in  \Omega^1 \cap \Omega^2 \cap  \Omega^3 \cap \Omega^4 \\
                                      0,   & \textrm{otherwise}
    \end{matrix}, \right.
\end{align*}
and the remaining functions $\{\psi^i\}_{i=2}^4$ are defined analogously.

In Figure \ref{fig: seq_results}, we report the time evolutions for 
\begin{enumerate}[label=(\alph*)]
\item the average number of kernel basis centers across all agents, i.e. $\frac{1}{N} \sum_{i=1}^N \vert \Xi_L^i \vert$, 
\item the largest fill distance of all the agents, i.e. $\max_{i=1:N}(h_{\Xi_{L}^i,\Omega^i})$, and
\item the estimator error measure $\Vert g - \ghat_L(t_k) \Vert_{\mathcal{C(X)}}$.
\end{enumerate}
Observe that as more basis centers are added and the fill distance tends to zero, the estimator error measure tends to zero even in instances where $\theta_{\ghat} \neq \theta_g$. In Figure \ref{fig:err_surf}, we overlay the final estimator error $\vert \ghat_L(t) - g \vert$ surface with the locations of the basis centers associated with each agent for the case when $\theta_{\ghat} = \theta_g$. Qualitatively speaking, we see that the estimator error is small when the set of all basis centers is dense in the domain $\Omega$.
\begin{figure}
  \centering
  \includegraphics[width = 2.75in]{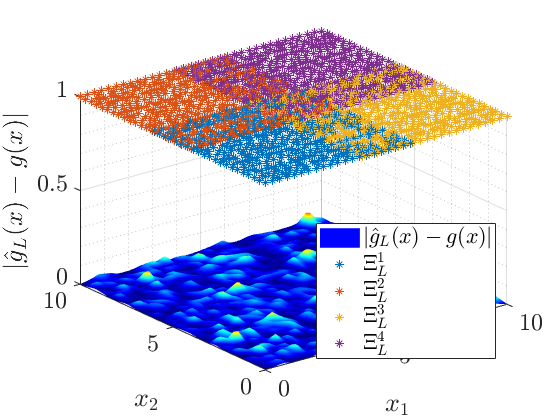}
  \caption{Estimator error surface after all agents have completed their sampling trajectories for the ideal case in which the estimator kernel hyperparameters coincide with the kernel hyperparameters of the underlying spatial field, i.e. $\theta_{\ghat}=\theta_{g}$.}
\label{fig:err_surf}
\end{figure}

\section{Conclusions} \label{sec:conclusions}
We have established that the proposed decentralized estimation framework is well-posed and has convergence rates that rely on (i) a notion of persistence excitation in reproducing kernel Hilbert spaces and (ii) the fill distance of sampling locations used to construct the finite-dimensional approximation. Given that each agent $i=1{:}N$ only uses its own measurements to update its estimator $\ghat^i_t$, much of the analysis the proposed decentralized estimation scheme follows from the centralized scheme for the case when $N=1$ and $\Omega^i=\Omega$. In contrast to the centralized setting, the strictly decentralized setting requires that we fuse the collection of local estimators $\{\ghat_t^i\}_{i=1}^N$ into a global estimator $\ghat_t$. By using a partition of unity to fuse the local estimators, we are able to establish that the estimator error $\Vert \ghat_t-g \Vert_{H_{\Omega}}$ converges to zero. The novel error analysis presented here for the strictly decentralized setting lays the ground work for part III of the family of papers in which we consider a hybrid decentralized estimation scheme in which each agent uses its own measurements and measurements of neighboring agents to update its estimator.

\bibliographystyle{./IEEEtran}
\bibliography{./IEEEabrv,./references}

\end{document}